\newtheorem{thm}{Theorem}[section]
\newtheorem{cor}[thm]{Corollary}
\newtheorem{lem}[thm]{Lemma}
\theoremstyle{definition}
\newtheorem{defn}[thm]{Definition}
\theoremstyle{remark}
\numberwithin{equation}{section}
\begin{document}
\sloppy
%
%
%
%
%
%
%
%
%
\title[Spiders can be recognized by counting their legs]
 {Spiders can be recognized by counting their legs}

\author[Berkemer]{Sarah Berkemer}
\address{
Bioinformatics Group, \\ Department of Computer Science and
Interdisciplinary Center for Bioinformatics \\
University of Leipzig,\\
H{\"a}rtelstrasse 16-18, D-04107 Leipzig, Germany }
\email{bsarah@bioinf.uni-leipzig.de}

\author[Chaves]{Ricardo Chaves}
\address{
Departamento de Ci{\^e}ncia da Computa{\c{c}}{\~{a}}o (CIC)\\
Instituto de Ci{\^e}ncias Exatas\\
Universidade de Bras{\'\i}lia\\
Campus Universit{\'a}rio - Asa Norte \\
Bras{\'\i}lia DF - CEP: 70910-900\\
 Brasil
 }
\email{rrcchaves@gmail.com}

\author[Fritz]{Adrian Fritz}
\address{%
Center for Bioinformatics \\
Saarland University \\
Building E 2.1, Room 413 \\
P.O. Box 15 11 50 \\
D - 66041 Saarbr\"{u}cken \\
Germany }
\email{s9adfrit@stud.uni-saarland.de}

\author[Hellmuth]{Marc Hellmuth}
\address{%
Center for Bioinformatics \\
Saarland University \\
Building E 2.1, Room 413 \\
P.O. Box 15 11 50 \\
D - 66041 Saarbr\"{u}cken \\
Germany }
\email{marc@bioinf.uni-leipzig.de}

\author[H.-Rosales]{Maribel Hernandez-Rosales}
\address{
Departamento de Ci{\^e}ncia da Computa{\c{c}}{\~{a}}o (CIC)\\
Instituto de Ci{\^e}ncias Exatas\\
Universidade de Bras{\'\i}lia\\
Campus Universit{\'a}rio - Asa Norte \\
Bras{\'\i}lia DF - CEP: 70910-900\\
 Brasil
 }
\email{maribel@bioinf.uni-leipzig.de}

\author[Stadler]{Peter F. Stadler}
\address{
Bioinformatics Group, \\
Department of Computer Science; and
Interdisciplinary Center for Bioinformatics,\\
University of Leipzig,\\
H{\"a}rtelstrasse 16-18, D-04107 Leipzig,Germany\\[0.1cm]
Max Planck Institute for Mathematics in the Sciences\\
Inselstrasse 22, D-04103 Leipzig, Germany\\[0.1cm]
RNomics Group, Fraunhofer Institut f{\"u}r Zelltherapie und Immunologie,
Deutscher Platz 5e, D-04103 Leipzig, Germany\\[0.1cm]
Department of Theoretical Chemistry,  University of Vienna,
  W{\"a}hringerstra{\ss}e 17, A-1090 Wien, Austria\\[0.1cm]
Santa Fe Institute, 1399 Hyde Park Rd., Santa Fe, NM87501, USA}
\email{studla@bioinf.uni-leipzig.de}

\thanks{%
  This work was funded in part by the German Research Foundation (DFG)
  (Proj.\ No.\ MI439/14-1).  }

\subjclass{Primary 05C07; Secondary 05C75}

\keywords{Phylogenetics, Cograph, P4-sparse, Spider, Degree Sequence}

\date{\today}

\dedicatory{To all arachnophobic mathematicians.}

\begin{abstract}
  Spiders are arthropods that can be distinguished from their closest
  relatives, the insects, by counting their legs. Spiders have 8, insects
  just 6. Spider graphs are a very restricted class of graphs that
  naturally appear in the context of cograph editing. The vertex set of a
  spider (or its complement) is naturally partitioned into a clique (the
  body), an independent set (the legs), and a rest (serving as the
  head). Here we show that spiders can be recognized directly from their
  degree sequences through the number of their legs (vertices with degree
  $1$). Furthermore, we completely characterize the degree sequences of
  spiders.
\end{abstract}

\maketitle

\section{Introduction}
Determining the evolutionary history of species based on sequence
information is one of the main challenges in phylogenomics. Recent advances
in phylogenetics have shown that cographs play a crucial role in order to
clean up orthology, resp., paralogy data, i.e., estimates of pairs of genes
that arose from a speciation, resp., duplication event in the gene tree
\cite{HHH+13,HHH+14}. However, the (decision version of the) problem to
edit a given graph into a cograph is NP-complete \cite{Liu:11,Liu:12}.
Nevertheless, the cograph editing problem can be solved in polynomial time,
whenever the graph under investigation is so-called $P_4$-sparse
\cite{Liu:11, Liu:12}. The structure of $P_4$-sparse graphs can be
characterized in terms of so-called spiders \cite{Jamison:89,Jamison:92}.
Hence, the cograph editing problem can be rephrased as a ``spider editing
problem''. To address the spider editing problem in future work, we give in
this contribution a full characterization of spiders in terms of their
degrees and respective degree sequences.

\section{Preliminaries}
We consider only simple graphs $G=(V,E)$. The \emph{degree} $\deg_G(v)$ of
a vertex $v\in V$ is defined as the number of edges that contain $v$. If
there is no risk of confusion we write simply $\deg(v)$ instead of
$\deg_G(v)$. The \emph{degree sequence} $(n_0,n_1,n_2\dots,n_w)$ is a list
of non-negative integers. A degree sequence is \emph{graphical}, if it can
be realized by a graph, i.e., if there is a graph $G$ s.t.\ there are
exactly $n_k$ vertices with degree $k$ in $G$. Of course, for simple graphs
we have $n_k=0$ for $k\ge |V|$. It is well-known that one can verify in
polynomial time, whether a degree sequence is graphical, either by means of
the Havel-Hakimi algorithm \cite{Havel:55,Hakimi:62} or by using the
Erd{\H{o}}s-Gallai Theorem \cite{EG:60}.

A graph $H$ is a \emph{subgraph} of a graph $G$, in symbols $H\subseteq G$,
if $V(H)\subseteq V(G)$ and $E(H)\subseteq E(G)$. A subgraph $H \subseteq
G$ is called \emph{induced}, when $xy\in E(H)$ if and only if $xy\in E(G)$
for all $x,y\in V(H)$. The \emph{complement} $\overline{G}$ of a graph
$G=(V,E)$ has vertex set $V$ and edge set $\overline{E}=\{xy\mid xy\not\in
E \text{ for all distinct } x,y\in V\}$.

A graph $G$ is called \emph{$P_4$-sparse} if every set of five vertices
induces at most one $P_4$ (chordless path on four vertices)
\cite{Hoang:85}. The efficient recognition of $P_4$-sparse graphs is
intimately connected to spider graphs, as the next lemma shows.

\begin{lem}[\cite{Jamison:89,Jamison:92}.]
A graph $G$ is $P_4$ sparse if and only if exactly one of the following three
alternatives is true for every induced subgraph $H$ of $G$: (i) $H$ is not
connected, (ii) $\overline{H}$ is not connected, or (iii) $H$ is a spider
\end{lem}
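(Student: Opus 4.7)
The plan is to handle the two directions separately, exploiting both the hereditary nature of $P_4$-sparseness and the self-complementarity of $P_4$ (so that $\overline{P_4}=P_4$, and complementation preserves induced $P_4$-counts).

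For the forward direction, heredity reduces the task to the single-graph statement: if $H$ is $P_4$-sparse with $|V(H)|\geq 2$, then $H$ is disconnected, $\overline{H}$ is disconnected, or $H$ is a spider. If $H$ contains no induced $P_4$ at all, then $H$ is a cograph, and the standard cograph decomposition theorem supplies a disconnection of $H$ or $\overline{H}$. Otherwise, one fixes an induced $P_4$ on vertices $v_1,v_2,v_3,v_4$ and classifies each remaining vertex $w$ by its adjacency pattern to $\{v_1,v_2,v_3,v_4\}$. Counting the induced $P_4$s on $\{v_1,\dots,v_4,w\}$ and invoking $P_4$-sparseness rules out most of the $2^4=16$ possible patterns; the surviving ones correspond exactly to the spider roles of body, leg, and head. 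A pairwise consistency check on two external vertices then glues these local roles into a global spider structure on $H$.

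For the converse, assume every induced subgraph of $G$ satisfies (i), (ii), or (iii), and let $H$ be induced by five vertices of $G$; it suffices to show $H$ has at most one induced $P_4$. In case (i), every connected component of $H$ has at most four vertices, so any induced $P_4$ must occupy all four vertices of some 4-vertex component, forcing that component to be $P_4$ itself and yielding at most one $P_4$ in $H$. In case (ii), the identity $\overline{P_4}=P_4$ puts the induced $P_4$s of $H$ and of $\overline{H}$ in bijection, so the argument of case (i) applied to $\overline{H}$ gives the bound. In case (iii), the only spider on five vertices has body and legs of size two and a single head, and a direct enumeration of the five 4-subsets shows that exactly one of them induces a $P_4$.

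The principal obstacle is the forward structural step: extracting the body/legs/head partition in a connected, co-connected, $P_4$-sparse graph. The adjacency-pattern enumeration above is elementary but bookkeeping-heavy, and the harder subpart is the pairwise consistency check that reconciles the roles assigned to different external vertices. A more conceptual alternative would proceed via the modular decomposition of $P_4$-sparse graphs, where the spider is forced as the only prime quotient compatible with $P_4$-sparseness.
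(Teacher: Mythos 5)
First, note that the paper does not prove this lemma at all: it is imported verbatim from Jamison and Olariu as the known structure theorem for $P_4$-sparse graphs, so there is no in-paper argument to compare yours against. Judged on its own terms, your converse direction is complete and correct: heredity lets you apply the trichotomy to any five-vertex induced subgraph $H$; a disconnected five-vertex graph can host an induced $P_4$ only as an entire four-vertex component, which then \emph{is} a $P_4$, so there is at most one; self-complementarity of $P_4$ transfers this to case (ii); and the unique five-vertex spider (body and legs of size two, one head vertex) induces exactly one $P_4$. You should also record the easy mutual-exclusivity half of ``exactly one'': a spider with $|K|=|S|\ge 2$ is both connected and co-connected, and a graph and its complement cannot both be disconnected.

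The genuine gap is in the forward direction, and you have correctly located it yourself. The pattern enumeration is fine, but the three surviving adjacency patterns of an external vertex $w$ to a fixed $P_4$ $v_1v_2v_3v_4$ are ``adjacent to none'', ``adjacent to exactly the midpoints $v_2,v_3$'', and ``adjacent to all four'', and these do \emph{not} line up one-to-one with leg, body, and head: head and body vertices both realize the midpoint pattern when the $P_4$ is leg--body--body--leg, while body vertices realize the all-four pattern when the $P_4$ sits inside the head. Converting these local labels into the global partition $(K,S,R)$ --- showing that the midpoint class together with $v_2,v_3$ is a clique, that the null class is stable and carries a perfect matching into it with $|K|=|S|$, and that a null-class vertex meets no body vertex other than its unique partner --- is exactly the substance of the Jamison--Olariu proof; it requires analyzing pairs of external vertices against the fixed $P_4$ and invoking connectedness and co-connectedness of $H$, none of which appears in your sketch. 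Either carry out that case analysis or, as you suggest, route this step through modular decomposition, where the spider arises as the only admissible prime quotient; both are standard, but neither is a one-liner, and as it stands the forward direction is an outline rather than a proof.
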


Spiders come in two sub-types, called \emph{thin} and \emph{thick}.

\begin{defn} \cite{Jamison:92,Nastos:12} A graph $G$ is a \emph{thin
    spider} if its vertex set can be partitioned into three sets $K$, $S$,
  and $R$ so that (i) $K$ is a clique; (ii) $S$ is a stable set; (iii)
  $|K|=|S|\ge 2$; (iv) every vertex in $R$ is adjacent to all vertices of
  $K$ and none of the vertices of $S$; and (v) each vertex in $K$ has a
  unique neighbor in $S$ and \emph{vice versa}.  \newline A graph $G$ is a
  \emph{thick spider} if its complement $\overline{G}$ is thin spider.
\end{defn}
The sets $K$, $S$, and $R$ are usually referred to as the \emph{body}, the
set of \emph{legs}, and \emph{head}, resp., of a thin spider, see Fig.\
\ref{fig:3col}. The path $P_4$ is the only graph that is both a thin and
thick spider.

\begin{figure}[tbp]
  \centering
  \includegraphics[bb= 237 495 393 591, scale=1.0]{./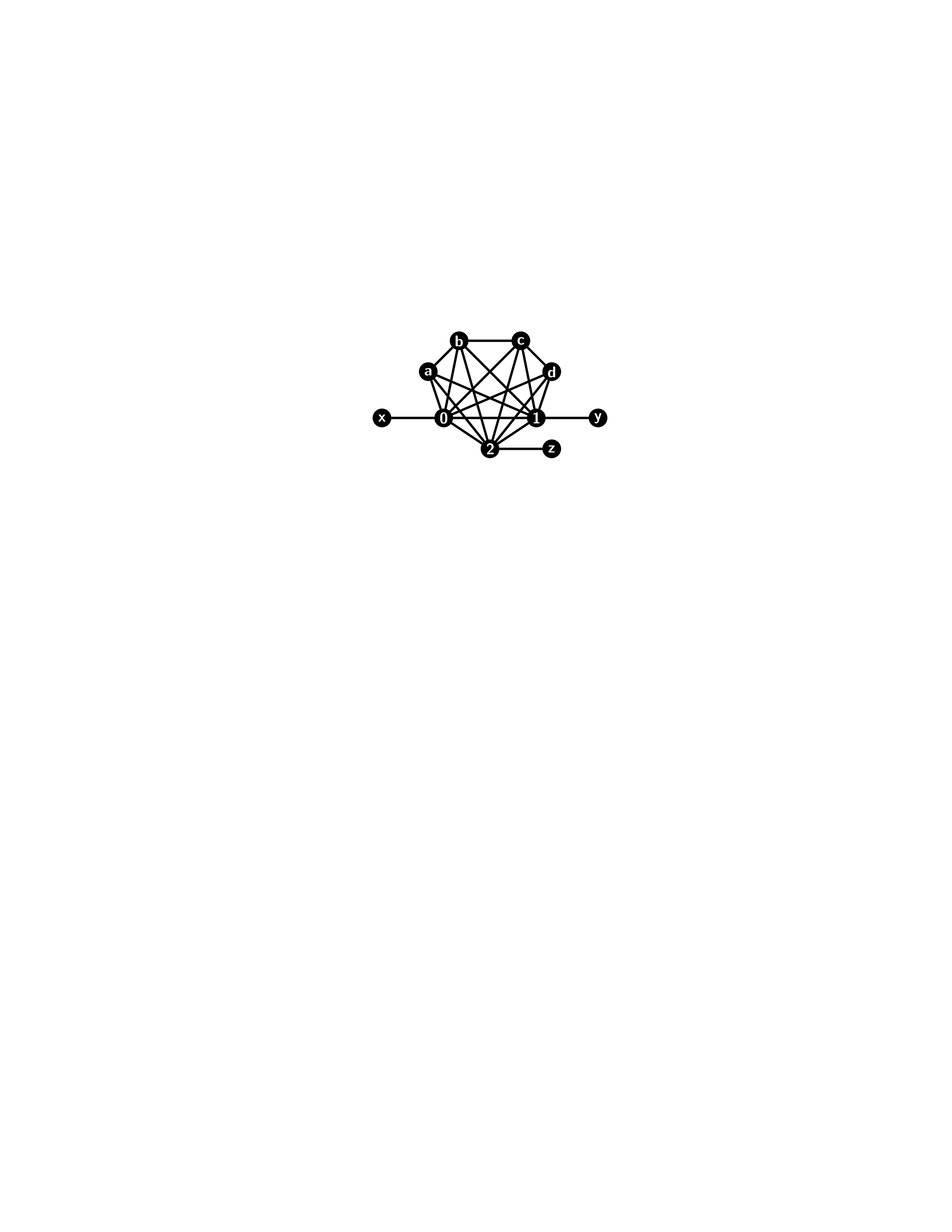}
  \caption{This thin spider, with vertex partition $K=\{0,1,2\}$,
    $S=\{x,y,z\}$ and $R=\{a,b,c,d\}$ has degree sequence
    $(0,3,0,0,2,2,0,3)$. In this example there is another spider hidden in
    the head $R$, namely $K'=\{b,c\}$, $S'=\{a,d\}$, and $R'=\emptyset$.}
  \label{fig:3col}
\end{figure}

\section{Results}

Spider graphs have characteristic degree sequences. Suppose $G$ is a thin
spider. It follows directly from the definition that $\deg(x)=1$ for all
$x\in S$: The leg $x$ is adjacent to a single body vertex and to none of
the head vertices (Condition (iv) and (v)). Each body vertex $x\in K$ has
$\deg(x)=|K|+|R|$, since it is adjacent to $1$ leg, all $|K|-1$ other body
vertices, and all $|R|$ head vertices. For a head vertex $x\in R$ we have
$|K|\le \deg(x)\le |K|+|R|-1$ since it is adjacent at least to all body
vertices and at most to all body and head vertices except itself. As
$|K|+|R|=|V|-|S|$ we can rephrase these observations in the following form,
which have in parts already be discovered in \cite[Obs.\ 2.8 \&
2.9]{Jamison:92}.
	
\begin{lem}
  \label{lem:thinspider}
  If $G$ is a thin spider, then it has $|S|\ge 2$ legs with degree $1$,
  $|S|$ body vertices with degree $|V|-|S|$, and $|V|-2|S|$ head vertices
  $x\in R$ with $2\le |S|\le \deg(x)\le |V|-|S|-1$.
\end{lem}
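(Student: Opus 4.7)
The plan is to read the three degree values directly off of conditions (i)--(v) of the thin spider definition, partition by partition, and then count how many vertices fall into each class.

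First, I would handle the legs. For $x\in S$, condition (ii) says $S$ is stable so $x$ has no neighbors in $S$, condition (iv) says no vertex of $R$ is adjacent to $x$, and condition (v) says $x$ has exactly one neighbor in $K$. Adding these up gives $\deg(x)=1$. Because $|S|\ge 2$ by condition (iii), there are at least two such legs, and altogether exactly $|S|$ of them.

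Next, I would compute the degree of a body vertex $x\in K$. Condition (i) makes $K$ a clique, contributing $|K|-1$ neighbors inside $K$; condition (v) supplies exactly one neighbor in $S$; condition (iv) forces all $|R|$ head vertices to be adjacent to $x$. Summing yields $\deg(x)=(|K|-1)+1+|R|=|K|+|R|$. Using $|K|=|S|$ from condition (iii) and $|V|=|K|+|S|+|R|$, this equals $|V|-|S|$. Since $|K|=|S|$, there are exactly $|S|$ body vertices with this degree.

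Finally, for a head vertex $x\in R$, condition (iv) guarantees $x$ is adjacent to all $|K|$ body vertices and to no leg, while its remaining neighbors (if any) lie in $R\setminus\{x\}$, a set of size $|R|-1$. Therefore $|K|\le \deg(x)\le |K|+|R|-1=|V|-|S|-1$, and since $|K|=|S|\ge 2$ this matches the claimed bounds $2\le |S|\le \deg(x)\le |V|-|S|-1$. The head has cardinality $|R|=|V|-|K|-|S|=|V|-2|S|$, giving the last count.

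I do not anticipate a real obstacle: every assertion of the lemma is a direct arithmetic consequence of the partition conditions, so the only care needed is bookkeeping to make sure the three substitutions $|K|=|S|$, $|K|+|R|=|V|-|S|$, and $|R|=|V|-2|S|$ are applied consistently.
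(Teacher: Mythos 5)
Your proposal is correct and follows essentially the same route as the paper: the paper also reads off $\deg(x)=1$ for legs, $\deg(x)=|K|+|R|$ for body vertices, and $|K|\le\deg(x)\le|K|+|R|-1$ for head vertices directly from conditions (i)--(v), and then substitutes $|K|=|S|$ and $|K|+|R|=|V|-|S|$. Your bookkeeping of the counts $|S|$, $|S|$, and $|V|-2|S|$ matches the paper's as well.
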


Our main result is that this condition is also sufficient to identify thin
spiders:
\begin{thm} 
  A graph $G=(V,E)$ with vertex set $V$ is a thin spider if and only if
  there is an integer $s\ge 2$ so that $G$ has exactly $s$ vertices of
  degree $1$ and exactly $s$ vertices of degree $|V|-s$.
\label{thm:1}
\end{thm}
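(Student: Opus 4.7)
The forward implication is immediate: if $G$ is a thin spider with partition $(K,S,R)$, Lemma~\ref{lem:thinspider} already gives $|S|=|K|\ge 2$, $s:=|S|$ vertices of degree $1$, and $s$ vertices of degree $|V|-s$. (For the head vertices one checks that their degrees $\deg(x)\in\{|S|,\dots,|V|-|S|-1\}$ are neither $1$ nor $|V|-s$, so the counts are exact.) All the work is in the converse.

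For the converse, write $n=|V|$ and let $S$ be the set of degree-$1$ vertices and $K$ the set of degree-$(n-s)$ vertices; by hypothesis $|S|=|K|=s\ge 2$. A short separate argument (ruling out the degenerate case $n-s=1$, in which the hypothesis forces too many degree-$1$ vertices to coexist with any graph) shows $S\cap K=\emptyset$, so with $R:=V\setminus(K\cup S)$ we have $|R|=n-2s\ge 0$. I would then verify conditions (i)--(v) of the definition via a single double-counting argument.

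The plan is to count the sums of degrees over $K$ and over $S$ in terms of the edge numbers $e(K,K),e(K,S),e(K,R),e(S,S),e(S,R)$:
\begin{align*}
s(n-s) &= 2e(K,K)+e(K,S)+e(K,R),\\
s &= 2e(S,S)+e(K,S)+e(S,R).
\end{align*}
The trivial upper bounds $2e(K,K)\le s(s-1)$ and $e(K,R)\le s(n-2s)$ applied to the first equation give $e(K,S)\ge s$, while the second equation forces $e(K,S)\le s$. Hence $e(K,S)=s$ and, crucially, \emph{both} inequalities are tight: $e(K,K)=\binom{s}{2}$ (so $K$ is a clique, giving (i)), $e(K,R)=s(n-2s)$ (so every vertex of $K$ is joined to every vertex of $R$, giving half of (iv)), and from the second equation $e(S,S)=e(S,R)=0$ (so $S$ is independent, giving (ii) and the other half of (iv)).

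It remains to deduce (v). Since $e(S,R)=e(S,S)=0$, the unique neighbor of each $x\in S$ lies in $K$. Conversely, for any $v\in K$ we already know $v$ is adjacent to all $s-1$ other vertices of $K$ and all $n-2s$ vertices of $R$, so its remaining $\deg(v)-(s-1)-(n-2s)=1$ neighbor lies in $S$. Thus the $s$ edges of $e(K,S)$ form a perfect matching between $K$ and $S$, giving~(v). The main obstacle is recognising that the two naive degree-counting identities, combined with the trivial bounds on $e(K,K)$ and $e(K,R)$, are already so tight that \emph{every} structural property of a thin spider falls out at once; once that is seen, no further combinatorial work is needed.
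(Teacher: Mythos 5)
Your converse is correct (modulo the caveat below) and arrives at the same partition as the paper --- both proofs read $K$ and $S$ off the degree condition and then verify axioms (i)--(v) --- but the verification itself is a genuinely different argument. The paper works locally, one vertex at a time: a body vertex missing all legs could have at most $|V|-|S|-1$ neighbours, so every body vertex meets a leg, and $|K|=|S|$ together with $\deg=1$ on $S$ forces a perfect matching (their Claim~1); the remaining $|V|-|S|-1$ neighbours of a body vertex must then exhaust all other non-leg vertices (their Claim~2), giving the clique and the complete join to $R$. Your version replaces both claims by the two degree-sum identities over $K$ and $S$ and the observation that the trivial bounds $2e(K,K)\le s(s-1)$ and $e(K,R)\le s(n-2s)$ squeeze $e(K,S)$ to exactly $s$, after which tightness delivers (i), (ii), (iv) and the matching (v) in one stroke. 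The two proofs are of comparable length; yours makes it more transparent that the hypothesis leaves no slack anywhere, while the paper's avoids edge-counting bookkeeping entirely.

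One genuine issue, which you share with the paper but mis-diagnose: the disjointness $K\cap S=\emptyset$, which your edge counts (and, implicitly, the paper's Claim~1) require, does \emph{not} follow from the hypothesis when $|V|-s=1$. Your parenthetical claim that this degenerate case ``forces too many degree-$1$ vertices to coexist with any graph'' is false: the star $K_{1,s}$ has $|V|=s+1$ and exactly $s$ vertices of degree $1=|V|-s$, so it satisfies the hypothesis with this $s$, yet it is not a thin spider (already $K_{1,3}$ fails, since any $2$-clique must use the centre and then condition (v) breaks). So the promised ``short separate argument'' does not exist; the theorem needs the implicit assumption that the two degree classes are distinct (equivalently $|V|-s\ne 1$, which then gives $2s\le|V|$). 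With that assumption made explicit, your proof is complete.
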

\begin{proof}
  The ``only if'' part of this statement is a consequence of
  Lemma~\ref{lem:thinspider}.
  
  \noindent
  Hence, consider a graph $G$ with $s\ge 2$ nodes with degree $1$ and $s$
  nodes of degree $|V|-s$. Define $K:=\{x\in V| \deg(x)=|V|-s\}$ as the set
  of body and $S:=\{x\in V| \deg(x)=1\}$ as the legs.

  \smallskip
  \noindent  
  \textbf{Claim 1.} \emph{For each body vertex $x\in K$ there is a unique leg
  $y\in S$ so that $xy\in E$.}

  \noindent	
  If $x$ is not adjacent to a leg $y\in S$ then it has at most
  $|V|-|S|-1$ neighbors, a contradiction. Thus, every body vertex $x\in K$
  has at least one adjacent leg $y\in S$. Since $|K|=|S|$ by assumption,
  every body node is adjacent to exactly one leg.  \hfill{$\triangleleft$}
  \smallskip

  \noindent
  This establishes property (v).  \smallskip

  \noindent
  \textbf{Claim 2.} \emph{Every body vertex $x\in K$ is adjacent to all
    non-leg vertices $y\in V\setminus S$, except itself.}

  \noindent
  Since $\deg(x)=|V|-|S|$ and one of the neighbors of $x$ is a leg, it is
  adjacent to $|V|-|S|-1$ non-leg vertices. There are in total $|V|-|S|$
  non-leg vertices, i.e., $|V|-|S|-1$ non-leg vertices other than $x$
  itself.  Since $x$ cannot be adjacent to itself, it is adjacent to all
  other non-leg vertices.  \hfill{$\triangleleft$} 
  \smallskip

  \noindent
  It follows that $K$ forms a clique in $G$, i.e., property (i) holds.
  Furthermore every head node $z\in R:=V\setminus(K\cup S)$ is connected to
  every body node and to none of the legs, i.e., property (iv)
  holds. Conditions (ii) and (iii) are satisfied by construction, thus $G$
  is a thin spider.
\end{proof}

Since the complement of a thick spider is a thin spider and \emph{vice
  versa}, the characterization of thin spiders immediately implies a
characterization of thick spiders because
$\deg_{\overline{G}}(x)+\deg_G(x)=|V|-1$. It follows that a thick spider
has $|S|$ body vertices of degree $|V|-2$ and $|S|$ leg vertices of degree
$|S|-1$.

\begin{cor} 
  $G$ is a thick spider if and only if there is an integer $s\ge 2$ so that
  $G$ has exactly $s$ vertices with degree $s-1$ and $s$ vertices with 
  degree $|V|-2$.
\end{cor}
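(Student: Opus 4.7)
The plan is to reduce the corollary directly to Theorem~\ref{thm:1} by invoking the definition of a thick spider together with the degree identity for complements. Since $G$ is a thick spider precisely when $\overline{G}$ is a thin spider, the claim follows if I can faithfully translate the characterization in Theorem~\ref{thm:1} (applied to $\overline{G}$) into a statement about degrees in $G$.

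The translation relies on the elementary identity $\deg_{\overline{G}}(x) + \deg_G(x) = |V|-1$, which the authors already invoke right before the corollary. Under this identity, a vertex has degree $1$ in $\overline{G}$ if and only if it has degree $|V|-2$ in $G$, and a vertex has degree $|V|-s$ in $\overline{G}$ if and only if it has degree $s-1$ in $G$. Since complementation does not affect $|V|$ or the integer $s$, the bound $s \ge 2$ is preserved verbatim.

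Putting these equivalences together, the condition ``$\overline{G}$ has exactly $s$ vertices of degree $1$ and exactly $s$ vertices of degree $|V|-s$'' becomes ``$G$ has exactly $s$ vertices of degree $|V|-2$ and exactly $s$ vertices of degree $s-1$'', which is the statement of the corollary. The two degree classes are automatically disjoint: if a vertex had degree both $s-1$ and $|V|-2$, then $s = |V|-1$; but then applying Lemma~\ref{lem:thinspider} to $\overline{G}$ would force the head set of $\overline{G}$ to have $|V|-2s = 1 - s < 0$ vertices, contradicting $s\ge 2$.

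I do not foresee any genuine obstacle: the entire argument is a bookkeeping exercise about how degrees transform under taking complements, and Theorem~\ref{thm:1} does all the nontrivial structural work. The only point that requires a moment's care is verifying that the vertex sets witnessing the two degree conditions are disjoint, which (as above) is automatic from $s\ge 2$.
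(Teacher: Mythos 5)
Your proposal is correct and matches the paper's own argument: the paper likewise derives the corollary by applying Theorem~\ref{thm:1} to $\overline{G}$ and translating degrees via $\deg_{\overline{G}}(x)+\deg_G(x)=|V|-1$. Your extra check that the two degree classes are disjoint is a harmless (and correct) addition the paper leaves implicit.
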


\begin{thm} Let  $\Pi = (n_0,n_1,\dots,n_w)$ be an arbitrary degree sequence. 
  Then it holds that $\Pi$ is realizable by a thin spider if and only if
  there are integers $s\ge 2$, $v\ge 4$, $s<v$ so that
  \begin{itemize}
  \item[(i)] $n_1=s$, $n_{v-s}=s$, and
  \item[(ii)] $\Pi' = (m_0,m_1,\dots,m_w)$ is a graphical degree sequence
    where
    \begin{equation*}
      m_k= 
      \begin{cases}
        n_k = 0, & \text{if }  2\le k<s \text{ or } k\geq v-s\\
        n_k-s,   & \text{otherwise}.
      \end{cases}
    \end{equation*}
  \end{itemize}

  A sequence $(n_1,\dots,n_j,\dots, n_{|V|-1})$ is the degree sequence of a
  thick spider if and only if $(n_{|V|-1},\dots,n_{|V|-1-j},\dots, n_1)$ is
  the degree sequence of a thin spider.
\end{thm}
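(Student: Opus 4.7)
The plan is to deduce the thin-spider characterization directly from Theorem~\ref{thm:1}, and then to transfer it to thick spiders via complementation. Writing $v:=|V|$, I read the sequence $\Pi'$ in~(ii) as encoding the degree sequence of the head subgraph $G[R]$ on $v-2s$ vertices, where each head vertex has lost the $s$ edges it had to the body. For the ``only if'' direction, suppose $\Pi$ is realized by a thin spider $G$ with body $K$, legs $S$ of common size $s\ge 2$, and head $R$. Lemma~\ref{lem:thinspider} immediately yields $n_1=s$ and $n_{v-s}=s$, and forces every remaining entry of $\Pi$ to vanish outside the head range $[s,v-s-1]$, since every vertex of $G$ has degree $1$, degree $v-s$, or degree in $[s,v-s-1]$. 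The induced subgraph $G[R]$ then realizes $\Pi'$ by construction, so $\Pi'$ is graphical.

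For the ``if'' direction, I would convert this analysis into an explicit construction. Given $\Pi$ satisfying~(i) and~(ii), pick any graph $H$ on $v-2s$ vertices realizing $\Pi'$; such an $H$ exists by the graphicality hypothesis. Build $G$ from $H$ by adjoining a clique $K$ of $s$ new vertices, joining every vertex of $K$ to every vertex of $V(H)$, adjoining an independent set $S$ of $s$ new vertices, and finally adding a perfect matching between $K$ and $S$. A one-line count then shows that each leg has degree $1$, each body vertex has degree $(s-1)+(v-2s)+1=v-s$, and a head vertex of degree $j$ in $H$ has degree $j+s$ in $G$, so the degree sequence of $G$ is exactly $\Pi$. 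Theorem~\ref{thm:1} applied with the same $s$ now certifies that $G$ is a thin spider.

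The thick-spider statement I would obtain by complementation. Since $\deg_{\overline{G}}(x)=v-1-\deg_G(x)$, the number of vertices of degree $k$ in $G$ equals the number of vertices of degree $v-1-k$ in $\overline{G}$, so the two degree sequences are reversals of one another. Because $G$ is a thick spider iff $\overline{G}$ is a thin spider, this immediately yields the stated equivalence. The only real obstacle I anticipate is index bookkeeping: reading the definition of $\Pi'$ consistently at the boundary indices $k\in\{1,\,s,\,v-s-1,\,v-s\}$, and handling the degenerate case $v=2s$, where the head is empty, $\Pi'$ is the vacuously graphical empty sequence, and $G$ collapses to a clique matched to an independent set.
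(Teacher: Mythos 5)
Your proposal is correct and follows essentially the same route as the paper: both directions hinge on reading $\Pi'$ as the degree sequence of the head graph $G[R]$, the converse is proved by attaching a body clique, a complete join to a realization of $\Pi'$, and a matched independent set of legs, and the thick-spider claim follows from $\deg_{\overline{G}}(x)=|V|-1-\deg_G(x)$ turning the degree sequence into its reversal. The only cosmetic difference is that you certify the constructed graph via Theorem~\ref{thm:1}, whereas the paper verifies the defining properties of a thin spider directly.
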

\begin{proof}
  Let $\Pi$ be realizable by a thin spider. Theorem \ref{thm:1} implies
  Conditions (i). As already observed, the $|V|-2|S|$ head vertices of a
  thin spider have degrees in the range from $|S|$ to $|V|-|S|-1$, that is
  reflected by the given degrees of $\Pi'$ in Condition (ii). Moreover,
  since $\Pi$ is the degree sequence of a thin spider $G$, $R$ induces a
  subgraph of $G$ and each vertex in $R$ has $s$ neighbors in $K$, we can
  conclude that $\Pi'$ is a graphical degree sequence. Note, from the
  algorithmic point of view, Condition (ii) can easily be checked by the
  Erd{\H{o}}s-Gallai Theorem \cite{EG:60}.

  Now assume that there are integers $s\ge 2$, $v\ge 4$, $s<v$ so that
  Conditions (i) and (ii) are fulfilled for $\Pi$. We construct a spider
  from the graph $(V,E)$ with $|V|=v$ and $E=\emptyset$. Choose a subset
  $K\subseteq V$ with $|K|=s$, $S\subseteq V\setminus K$ with $|S|=s$, and
  set $R=V\setminus (K\cup S)$. Now, add edge $xy$ to $E$ for all distinct
  $x,y\in K$, choose a bijection $f:K\to S$ and add $xf(x)$ to $E$ for all
  $x\in K$ and finally add edge $ab$ to $E$ for all $a\in K,b\in R$.
  Clearly, the construction is feasible in terms of Condition (i) and (ii),
  and fulfills the properties of a thin spider. Finally, Condition (ii)
  also ensures that $R$ induces itself a graph, that can be constructed
  with the Havel-Hakimi algorithm \cite{Havel:55, Hakimi:62}.

  The result for thick spiders follows by observing that the degree
  sequence of $\overline{G}$ is related to the degree sequence of $G$ via
  $\bar n_{k}= n_{|V|-1-k}$ for $0\le k< |V|-1$.
\end{proof}

\bibliographystyle{plain}
\bibliography{cogra}

\end{document}